\documentclass[conference]{IEEEtran}

\usepackage{amsmath,amsthm}
\usepackage{color}
\usepackage{bbm,bm}
\usepackage{pgfplots}

\usepackage{tikz,tkz-graph}
\usetikzlibrary{matrix,positioning,fit,calc,shapes}
\pgfplotsset{compat=newest}
\interdisplaylinepenalty=2500

\hyphenation{op-tical net-works semi-conduc-tor}

\setcounter{tocdepth}{2}

\usepackage{graphicx}
\usepackage{color}
\usepackage{epsfig, amsmath, amsthm, amssymb,latexsym,amsmath,amsbsy, rotating}
\usepackage{amsfonts}

\newtheorem{theorem}{Theorem}
\newtheorem{lemma}[theorem]{Lemma}

\newtheorem{proposition}[theorem]{Proposition}

\newtheorem{remark}{Remark}

\theoremstyle{definition}



\let\oldbrace\{
\def\{{\oldbrace\kern0.5pt}

%
%
%
%
%
%
%
%
%
%
%
%
%

%

%


\newcommand{\ve}{\bf}
\newtheorem{definition}{Definition}

%
%

\DeclareFontFamily{U}{futm}{}
\DeclareFontShape{U}{futm}{m}{n}{
  <-> s * [.92] fourier-bb
  }{}
\DeclareSymbolFont{Ufutm}{U}{futm}{m}{n}
\DeclareSymbolFontAlphabet{\mathbb}{Ufutm}

\begin{document}

\title{The Gaussian lossy Gray-Wyner network}

\author{\IEEEauthorblockN{Erixhen Sula and Michael Gastpar}
\IEEEauthorblockA{School of Computer and Communication Sciences, EPFL, CH-1015 Lausanne, Switzerland\\{\texttt{\{erixhen.sula,michael.gastpar\}@epfl.ch}}}}

\maketitle

\begin{abstract}
We consider the problem of source coding subject to a fidelity criterion for the Gray-Wyner network that connects a single source with two receivers via a common channel and two private channels. The pareto-optimal trade-offs between the sum-rate of the private channels and the rate of the common channel is completely characterized for jointly Gaussian sources subject to the mean-squared error criterion, leveraging convex duality and an argument involving the factorization of convex envelopes. Specifically, it is attained by selecting the auxiliary random variable to be jointly Gaussian with the sources.
\end{abstract}
\begin{IEEEkeywords}
Gray-Wyner network, Gaussian optimality, dependent sources.
\end{IEEEkeywords}

\IEEEpeerreviewmaketitle

\section{Introduction}
Source coding for network scenarios has a long history, starting with the work of Slepian and Wolf~\cite{Slepian--Wolf} concerning the distributed compression of correlated sources in a lossless reconstruction setting. In this work, we study a source coding network introduced by Gray and Wyner~\cite{Gray--Wyner}.
In this network, there is a single encoder. It encodes a pair of sources, $(X,Y),$ into three messages, namely, a common message and two private messages.
There are two decoders, both receiving the common message, but each only receiving one of the private messages.
For this problem, both in the setting of lossless and of lossy reconstruction, Gray and Wyner fully characterized the optimal rate(-distortion) regions in~\cite{Gray--Wyner}, up to the optimization over a single auxiliary random variable (which represents the common message).

The contributions of the present paper includes, for the Gaussian lossy Gray-Wyner network under mean-squared error distortion, we prove that it is optimal to select the auxiliary random variable to be jointly Gaussian with the source random variables and for a sufficiently symmetric version, we compute closed-form solutions.

An alternative operational interpretation of the Gray-Wyner network as a model for a caching system has been proposed in~\cite[Section III.C]{Wang--Lim--Gastpar} 


\section{The Gray-Wyner Network}\label{Sec-operational}
Gray and Wyner in~\cite{Gray--Wyner} introduced a particular network source coding problem referred to as the Gray-Wyner network.

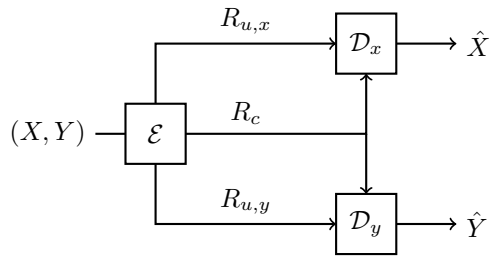
\begin{figure}[ht]%
\centering
\begin{tikzpicture}[scale=0.8]
\draw[black,thick] (0,1) rectangle (1,2);
\draw[black,thick,anchor=east] (-0.5,1.5) node{$(X,Y)$};
\draw[black,thick,anchor=west] (5.5,3) node{$\hat{X}$};
\draw[black,thick,anchor=west] (5.5,0) node{$\hat{Y}$};
\draw[black,thick,anchor=center] (0.5,1.5) node{$\mathcal{E}$};
\draw[black,thick,anchor=south] (2,0) node{$R_{u,y}$};
\draw[black,thick,anchor=south] (2,1.5) node{$R_c$};
\draw[black,thick,anchor=south] (2,3) node{$R_{u,x}$};
\draw[black,thick](-0.5,1.5)--(0,1.5);
\draw[->,black,thick](0.5,1)--(0.5,0)--(3.5,0);
\draw[->,black,thick](1,1.5)--(4,1.5)--(4,2.5);
\draw[->,black,thick](4,1.5)--(4,0.5);
\draw[->,black,thick](0.5,2)--(0.5,3)--(3.5,3);
\draw[black,thick,anchor=center] (4,0) node{$\mathcal{D}_y$};
\draw[black,thick,anchor=center] (4,3) node{$\mathcal{D}_x$};
\draw[black,thick] (3.5,-0.5) rectangle (4.5,0.5);
\draw[black,thick] (3.5,2.5) rectangle (4.5,3.5);
\draw[->,black,thick](4.5,0)--(5.5,0);
\draw[->,black,thick](4.5,3)--(5.5,3);
\end{tikzpicture}
\caption{The Gray-Wyner Network} 
\label{fig:Gray-Wyner}
\end{figure}

The Gray-Wyner network~\cite{Gray--Wyner} is composed of one joint sender and two receivers. The purpose of this network is to convey the joint source $(X,Y)$ (where source $X$ and $Y$ are correlated) to the two receivers, such that each receiver gets only one of the source, either $X$ or $Y$. In other words, receiver or decoder $\mathcal{D}_x$ wants to obtain source $X$, and receiver or decoder $\mathcal{D}_y$ wants to obtain source $Y$. The network is consisting of three links or channels as described in the figure. The central link, of rate $R_{\sc c},$ is provided to both receivers. In addition, each receiver also has access to only one private link. From now on we denote the rates of the private links by $R_{{\sc u}, x}$ and $R_{{\sc u}, y},$ respectively. The main result of~\cite[Theorem 4]{Gray--Wyner}, says that the rate region is given by the closure of the union of the regions
\begin{align} \label{eqn:main}
\mathcal{R}=\{ &(R_{\sc c},R_{{\sc u}, x},R_{{\sc u}, y}):R_{\sc c} \ge I(X,Y;W),  \\
&R_{{\sc u}, x} \ge H(X|W) , R_{{\sc u}, y} \ge H(Y|W) \},\label{eq-GrayWyner}
\end{align}
where the union is over all probability distributions $p(w, x, y)$ with marginals $p(x,y).$

\subsection{Notation}

We use the following notation. Random variables are denoted by uppercase letters and their realizations by lowercase letters. Random column vectors are denoted by boldface uppercase letters and their realizations by boldface lowercase letters. We denote matrices with uppercase letters, e.g., $A,B,C$. For the cross-covariance matrix of $\ve X$ and $\ve Y$, we use the shorthand notation $K_{\ve X \ve Y}$, and for the covariance matrix of a random vector $\ve X$ we use the shorthand notation $K_{\ve X}:= K_{\ve X \ve X}$. In slight abuse of notation, we will let $K_{(X,Y)}$ denote the covariance matrix of the stacked vector $(X,Y)^T.$ We denote the Kullback-Leibler divergence with $D(.||.)$. We denote $\log^{+}{(x)}=\max(\log{x},0)$. 

\section{The Gaussian lossy Gray-Wyner Network}\label{Sec-GaussianGrayWyner}

As in the original work of Gray and Wyner~\cite{Gray--Wyner} (Theorem 8), one may instead ask for {\it lossy}  reconstructions of the original sources $X$ and $Y$ with respect to fidelity criteria. This motivates the following definition (see also the quantity $T(\alpha)$ in~\cite[Remark (4) following Theorem 8]{Gray--Wyner}).


\begin{definition}[Gray-Wyner rate-distortion function]\label{def-Gray-Wyner}
For random variables $X$ and $Y$ with joint distribution $p(x,y),$ the Gray-Wyner rate-distortion function is defined as
\begin{align}
\mathrm{R}_{{\bm{\Delta}}, {\bm{\alpha}}}(X, Y) &= \inf I(X, Y ; W)
\end{align}
such that $I(X; \hat{X}|W)  \leq \alpha_x$ and $I(Y; \hat{Y}|W)\le \alpha_y,$ where the minimum is over all probability distributions $p(\hat{x}, \hat{y}, w, x, y)$ with marginals $p(x,y)$ and satisfying
\begin{align}
{\mathbb E}[d_x(X, \hat{X})] \le \Delta_x & \mbox{ and } {\mathbb E}[d_y(Y, \hat{Y})] \le \Delta_y,
\end{align}
where $d_x(\cdot, \cdot)$ and $d_y(\cdot, \cdot)$ are arbitrary single-letter distortion measures (as in, e.g.,~\cite[Eqn. (30) ff.]{Gray--Wyner}).
\end{definition}

Let us consider a special case of definition \ref{def-Gray-Wyner} for which we can derive a closed-form solution. For a fixed probability distribution $p(x, y),$ we define
\begin{align}
\mathrm{R}_{\Delta, \alpha}(X, Y) &= \inf I(X, Y ; W)
\end{align}
such that $I(X; \hat{X}|W) + I(Y; \hat{Y}|W)\le \alpha,$
where the minimum is over all probability distributions $p(\hat{x}, \hat{y}, w, x, y)$ with marginals $p(x,y)$ and satisfying
\begin{align}
{\mathbb E}[d_x(X, \hat{X})] \le \Delta & \mbox{ and } {\mathbb E}[d_y(Y, \hat{Y})] \le \Delta,
\end{align}
where $d_x(\cdot, \cdot)$ and $d_y(\cdot, \cdot)$ are arbitrary single-letter distortion measures (and $\Delta_x=\Delta_y=\Delta$).
Another equivalent way of writing would be
\begin{align}
\mathrm{R}_{\Delta, \alpha}(X, Y) &= \min_{\alpha_x+\alpha_y = \alpha} \mathrm{R}_{ {\bm{\Delta}}, {\bm{\alpha}}}(X, Y).
\end{align}

%

\begin{theorem}\label{thm-Gauss-lossymum}
Let $X$ and $Y$ be jointly Gaussian with mean zero, equal variance $\sigma^2,$ and with correlation coefficient $\rho.$
Let $d_x(\cdot, \cdot)$ and $d_y(\cdot, \cdot)$ be the mean-squared error distortion measure. Then,
\begin{align} \label{eqn:piecedouble}
\lefteqn{ \mathrm{R}_{\Delta, \alpha}(X, Y) } \nonumber \\
 &= \left\{ \begin{array}{lr} \frac{1}{2} \log^{+}{\frac{1+\rho}{2\frac{\Delta }{\sigma^2}e^{\alpha}+\rho-1}}, &   \mbox{ if } \sigma^2(1-\rho) \le \Delta e^{\alpha} \le \sigma^2  \\
 \frac{1}{2} \log^{+}{\frac{1-\rho^2}{\frac{\Delta^2}{\sigma^4}e^{2\alpha}}} , &   \mbox{ if }  \Delta e^\alpha \le \sigma^2(1-\rho).
\end{array} \right. 
\end{align}
\end{theorem}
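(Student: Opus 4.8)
The plan is to treat achievability and the converse separately. Achievability is by exhibiting explicit jointly Gaussian auxiliaries $W$ together with Gaussian test channels for $(\hat X,\hat Y)$. For the converse, once it is established that $W$ may be taken jointly Gaussian with $(X,Y)$ — the main point, proved via convex duality together with a factorization property of lower convex envelopes — the problem reduces to a two-dimensional optimization over the conditional covariance matrix $\Sigma=K_{(X,Y)|W}$, solved in closed form below.

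\emph{Achievability.} In the first regime $\sigma^{2}(1-\rho)\le\Delta e^{\alpha}\le\sigma^{2}$, take $W$ scalar, jointly Gaussian with $(X,Y)$, with $\mathrm{Var}(W)=\sigma^{2}$ and $\mathrm{Cov}(X,W)=\mathrm{Cov}(Y,W)=\sigma^{2}a$ where $a^{2}=1-\Delta e^{\alpha}/\sigma^{2}$; this makes $\mathrm{Var}(X|W)=\mathrm{Var}(Y|W)=\Delta e^{\alpha}$, and one checks the resulting $3\times3$ covariance matrix is positive semidefinite in this regime. Now let $\hat X$ (resp.\ $\hat Y$) be produced from the conditional law of $X$ (resp.\ $Y$) given $W$ by the usual Gaussian test channel attaining mean-squared distortion $\Delta$; each costs $\tfrac12\log(\Delta e^{\alpha}/\Delta)=\alpha/2$ nats, so the sum-rate constraint holds with equality, and $I(X,Y;W)=\tfrac12\log\bigl(\det K_{(X,Y)}/\det K_{(X,Y)|W}\bigr)$ evaluates to the first branch. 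In the second regime $\Delta e^{\alpha}\le\sigma^{2}(1-\rho)$, instead choose a (possibly vector-valued) Gaussian $W$ making $X$ and $Y$ conditionally independent given $W$ with $\mathrm{Var}(X|W)\,\mathrm{Var}(Y|W)=\Delta^{2}e^{2\alpha}$; such a $W$ exists exactly when $\Delta e^{\alpha}\le\sigma^{2}(1-\rho)$, and the same distortion-$\Delta$ test channels give the second branch. Finally, if $\Delta e^{\alpha}\ge\sigma^{2}$, then $W$ independent of $(X,Y)$ together with the marginal Gaussian test channels is already feasible and achieves rate $0$, consistent with the $\log^{+}$.

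\emph{Converse.} With $W$ jointly Gaussian, write $\Sigma=K_{(X,Y)|W}$, so $0\prec\Sigma\preceq K_{(X,Y)}$ and $I(X,Y;W)=\tfrac12\log\bigl(\det K_{(X,Y)}/\det\Sigma\bigr)$; moreover the conditional rate--distortion functions, by reverse water-filling with constant conditional variance, reduce the constraint to $\tfrac12\log^{+}(\Sigma_{11}/\Delta)+\tfrac12\log^{+}(\Sigma_{22}/\Delta)\le\alpha$. Hence one must \emph{maximize} $\det\Sigma=\Sigma_{11}\Sigma_{22}-\Sigma_{12}^{2}$ subject to $\Sigma_{11}\Sigma_{22}\le\Delta^{2}e^{2\alpha}$ (in the range where both $\log^{+}$ terms are active) and $K_{(X,Y)}-\Sigma\succeq0$. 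I would proceed in two steps: for fixed product $P=\Sigma_{11}\Sigma_{22}$, the semidefiniteness constraint together with $\Sigma_{11}+\Sigma_{22}\ge2\sqrt P$ forces $\Sigma_{12}^{2}\ge\bigl(\max(\sqrt P-\sigma^{2}(1-\rho),0)\bigr)^{2}$, attained at $\Sigma_{11}=\Sigma_{22}=\sqrt P$; then $\det\Sigma$ is increasing in $P$, so $P=\Delta^{2}e^{2\alpha}$, and substituting $\sqrt P=\Delta e^{\alpha}$ yields $\det\Sigma=\Delta^{2}e^{2\alpha}$ when $\Delta e^{\alpha}\le\sigma^{2}(1-\rho)$ and $\det\Sigma=\sigma^{2}(1-\rho)\bigl(2\Delta e^{\alpha}-\sigma^{2}(1-\rho)\bigr)$ when $\sigma^{2}(1-\rho)\le\Delta e^{\alpha}\le\sigma^{2}$; a short case check disposes of the boundary regimes where one $\log^{+}$ vanishes or $\Delta e^{\alpha}\ge\sigma^{2}$. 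These values are exactly the denominators in the claimed formula. It is worth noting that the second branch also admits an elementary converse not using Gaussianity of $W$: from $I(X,Y;W)=h(X,Y)-h(X|W)-h(Y|W)+I(X;Y|W)$, the conditional Shannon lower bound applied to the rate constraint gives $h(X|W)+h(Y|W)\le\log(2\pi e\Delta)+\alpha$, and $I(X;Y|W)\ge0$ then yields precisely $\tfrac12\log\bigl(\sigma^{4}(1-\rho^{2})/(\Delta^{2}e^{2\alpha})\bigr)$.

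\emph{Main obstacle.} The hard part is the Gaussian optimality of $W$. The elementary argument above suffices for the second branch but falls short in the first, where the optimal $W$ is forced to leave a strictly positive conditional correlation between $X$ and $Y$, and controlling it requires more than the scalar entropy-power inequalities. This is where convex duality and the factorization of lower convex envelopes enter: after dualizing the rate and distortion constraints, $\mathrm{R}_{\Delta,\alpha}$ is expressed through the lower convex envelope of a $\log\det$-type functional of $p_{X,Y,\hat X,\hat Y|W=w}$, and one must show that this envelope tensorizes, so that a doubling/rotation argument using the rotation-invariance of i.i.d.\ Gaussians forces a Gaussian extremizer. Establishing the tensorization and carrying out the ensuing single-letterization cleanly — while keeping track of which constraints (equivalently, which $\log^{+}$ terms) are active — is the delicate step; the remaining closed-form optimization described above is routine.
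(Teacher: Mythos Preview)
Your proposal is correct and hits the same ingredients the paper uses --- convex duality and the factorization-of-convex-envelopes mechanism for Gaussian optimality --- but you organize the converse differently. The paper does \emph{not} first establish ``$W$ may be taken jointly Gaussian'' and then optimize over $\Sigma=K_{(X,Y)|W}$. Instead it dualizes the sum-rate constraint with a single multiplier $\nu$, splits the resulting objective into the functional $h(X|W)+h(Y|W)-\tfrac{1}{\nu}h(X,Y|W)$ (handled by the hypercontractivity-type result it quotes as Theorem~\ref{Thm:Hypercontract}) and the distortion term $-\,h(X|W,\hat X)-h(Y|W,\hat Y)$ (bounded by the Shannon upper bound $\tfrac12\log(2\pi e\Delta)$), evaluates the Gaussian minimum of the first piece via an explicit $2\times2$ calculation (their Lemma~\ref{lem:lemmasymmery}), and finally maximizes the resulting concave function of $\nu$ over $\nu\in[\tfrac{1}{1+\rho},1]$. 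Achievability is not argued explicitly; they cite the Gaussian evaluation in prior work.

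What your route buys: the primal optimization over $\Sigma$ is geometrically transparent (your ``for fixed product $P$, AM--GM on the diagonal, then the PSD constraint forces the minimal off-diagonal'' argument is clean and makes the two regimes visibly arise from whether $\sqrt P$ exceeds $\sigma^2(1-\rho)$), and your direct Shannon-lower-bound converse for the second branch is a nice observation the paper does not isolate. What the paper's route buys: it avoids the intermediate claim ``Gaussian $W$ is optimal in the constrained problem,'' which would need a strong-duality step on top of factorization; by working only with weak duality and matching against achievability at the end, the paper never has to justify that the Lagrangian relaxation is tight a priori --- it just exhibits matching upper and lower bounds. Your ``Main obstacle'' paragraph correctly flags this subtlety, but if you pursue your organization you should be explicit that the Gaussian-optimality claim is established indirectly (dual lower bound $+$ achievability), not by a direct tensorization argument on the constrained problem itself.
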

The proof of this theorem is given in Section~\ref{sec-proof-thm-Gauss-lossymum}.
\begin{remark}
Assuming that auxiliaries are jointly Gaussian with the sources, the same formula was derived in~\cite[Theorem 4.3]{EPFL8001} via a different reasoning.
\end{remark}
Figure \ref{fig:plot} will illustrate the piecewise function of (\ref{eqn:piecedouble}) in terms of $\Delta e^{\alpha}$, for the specific choice of $\rho=0.5$ and $\sigma^2=1$.
\begin{figure}[ht]%
\centering
\scalebox{0.8}{\begin{tikzpicture}
\draw[->, line width=1pt](0,0)--(9,0);
\draw[->, line width=1pt](0,0)--(0,6.5);
\draw[black,thick,anchor=north] (8.5,0) node{$\Delta e^{\alpha}$};
\draw[black,thick,anchor=center] (0,7) node{$\mathrm{R}_{\Delta ,\alpha}(X,Y)$};
\pgfplotsset{ticks=none}
\begin{axis}[%
hide y axis,
hide x axis,
width=4in,
height=3in,
xmin=0,
xmax=1,
ymin=0,
ymax=4.5,
xlabel={$\gamma_i$},
ylabel={$C_{\gamma_i}({\ve X}_i,{\ve Y}_i)$},
legend style={at={(0.8,0.8)}, legend cell align=center, align=center, draw=white!15!black}
]
\addplot [line width=1pt, draw=none, mark=asterisk, mark options={solid, black}]
  table[row sep=crcr]{%
0.5	0.549306144334055\\
0.51	0.529695787757414\\
0.52	0.510825623765991\\
0.53	0.492641801680553\\
0.54	0.475096141774918\\
0.55	0.458145365937077\\
0.56	0.441750454525582\\
0.57	0.425876105368292\\
0.58	0.410490276034915\\
0.59	0.395563794460075\\
0.6	0.381070026023448\\
0.61	0.3669845875401\\
0.62	0.353285100446043\\
0.63	0.339950976904962\\
0.64	0.326963233703332\\
0.65	0.314304329711187\\
0.66	0.301958023416001\\
0.67	0.289909247626471\\
0.68	0.278143998921374\\
0.69	0.266649239809025\\
0.7	0.255412811882995\\
0.71	0.244423358523608\\
0.72	0.233670255913126\\
0.73	0.22314355131421\\
0.74	0.212833907712842\\
0.75	0.202732554054082\\
0.76	0.192831240405992\\
0.77	0.183122197477442\\
0.78	0.173598099992094\\
0.79	0.164252033486018\\
0.8	0.15507746415192\\
0.81	0.146068211400581\\
0.82	0.13721842285088\\
0.83	0.128522551494945\\
0.84	0.119975334815296\\
0.85	0.111571775657105\\
0.86	0.1033071246815\\
0.87	0.0951768642456094\\
0.88	0.087176693572389\\
0.89	0.0793025150883192\\
0.9	0.0715504218203367\\
0.91	0.0639166857549423\\
0.92	0.0563977470726723\\
0.93	0.0489902041801019\\
0.94	0.0416908044695256\\
0.95	0.0344964357434758\\
0.96	0.0274041182474975\\
0.97	0.0204109972601276\\
0.98	0.0135143361939597\\
0.99	0.00671151016607036\\
1	0\\
};
\addlegendentry{$(1-\rho)\sigma^2\leq \Delta e^{\alpha}\leq \sigma^2$}

\addplot [line width=1pt, color=black, mark=o, mark options={solid, black}]
  table[row sep=crcr]{%
0	inf\\
0.01	4.4613291497622\\
0.02	3.76818196920226\\
0.03	3.36271686109409\\
0.04	3.07503478864231\\
0.05	2.8518912373281\\
0.06	2.66956968053415\\
0.07	2.51541900070689\\
0.08	2.38188760808237\\
0.09	2.26410457242598\\
0.1	2.15874405676816\\
0.11	2.06343387696383\\
0.12	1.9764224999742\\
0.13	1.89637979230066\\
0.14	1.82227182014694\\
0.15	1.75327894865999\\
0.16	1.68874042752242\\
0.17	1.62811580570598\\
0.18	1.57095739186604\\
0.19	1.51689017059576\\
0.2	1.46559687620821\\
0.21	1.41680671203878\\
0.22	1.37028669640389\\
0.23	1.32583493383305\\
0.24	1.28327531941426\\
0.25	1.242453324894\\
0.26	1.20323261174072\\
0.27	1.16549228375787\\
0.28	1.129124639587\\
0.29	1.09403331977573\\
0.3	1.06013176810005\\
0.31	1.02734194527705\\
0.32	0.995593246962474\\
0.33	0.964821588295721\\
0.34	0.93496862514604\\
0.35	0.905981088272787\\
0.36	0.877810211306091\\
0.37	0.850411237117977\\
0.38	0.823742990035815\\
0.39	0.797767503632554\\
0.4	0.772449695648264\\
0.41	0.747757083057893\\
0.42	0.723659531478833\\
0.43	0.700129034068639\\
0.44	0.67713951584394\\
0.45	0.654666659991881\\
0.46	0.632687753273106\\
0.47	0.611181548052142\\
0.48	0.59012813885431\\
0.49	0.569508851651574\\
0.5	0.549306144334055\\
};
\addlegendentry{$0\leq \Delta e^{\alpha} \leq (1-\rho)\sigma^2$}
\end{axis}
\end{tikzpicture}
\caption{Piecewise function, $\mathrm{R}_{\Delta, \alpha}(X, Y)$ versus $\Delta e^{\alpha}.$}
\label{fig:plot}
\end{figure}
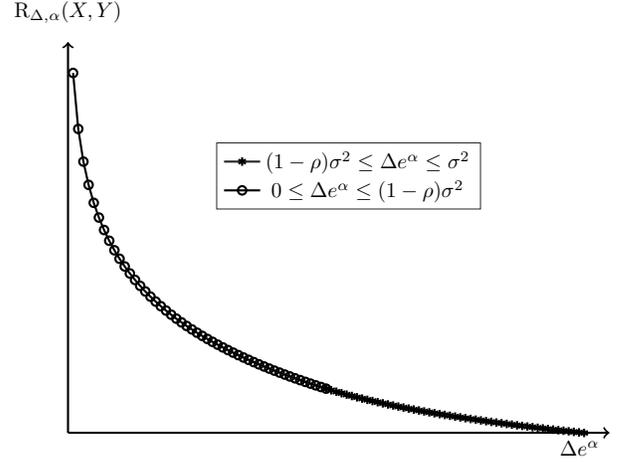

\section{Proof of Theorem~\ref{thm-Gauss-lossymum}}\label{sec-proof-thm-Gauss-lossymum}

\subsection{Preliminary Results for the Proof of Theorem~\ref{thm-Gauss-lossymum}}

The following results are used as intermediate tools in the proof of the main results.
\begin{theorem} \label{Thm:Hypercontract}
For $K \succeq 0$, $0 < \lambda < 1$, there exists a $0\preceq K^{\prime} \preceq K$ and $(X^{\prime},Y^{\prime})\sim \mathcal{N}(0,K^{\prime})$ such that $(X, Y)\sim p_{X,Y}$ with covariance matrix $K$ the following inequality holds:
\begin{align}
\inf_W h(Y|W)+& h(X|W)  - (1+\lambda) h(X, Y|W)  \\
 &\geq h(Y^{\prime})+ h(X^{\prime}) -(1+\lambda)h(X^{\prime}, Y^{\prime}) . \label{Eq-Thm:Hypercontract}
\end{align}
\end{theorem}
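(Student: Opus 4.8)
\emph{Setup and reformulation.} The plan is to show that, among all sources with covariance $K$ and all auxiliaries, a Gaussian source together with a jointly Gaussian auxiliary is extremal for the functional $\Phi(W):=h(X\mid W)+h(Y\mid W)-(1+\lambda)h(X,Y\mid W)$. First I would rewrite $\Phi(W)=(1+\lambda)I(X;Y\mid W)-\lambda\big(h(X\mid W)+h(Y\mid W)\big)$; since $h(X\mid W)\le h(X)<\infty$, $h(Y\mid W)\le h(Y)<\infty$ and $I(X;Y\mid W)\ge 0$, the functional is bounded below, so $\inf_W\Phi(W)$ is finite. The right-hand side of \eqref{Eq-Thm:Hypercontract} is $\psi(K'):=\tfrac12\log(2\pi e\,K'_{11})+\tfrac12\log(2\pi e\,K'_{22})-\tfrac{1+\lambda}{2}\log\big((2\pi e)^2\det K'\big)$ evaluated at some $0\preceq K'\preceq K$, and for a Gaussian source with a jointly Gaussian auxiliary the attainable conditional covariances are exactly $\{K':0\preceq K'\preceq K\}$, for which $\Phi=\psi(K')$. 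Thus the theorem is equivalent to $\inf_W\Phi(W)\ge\min_{0\preceq K'\preceq K}\psi(K')$. I would emphasize up front that no term-by-term maximum-entropy bound can work: Gaussianizing a conditional law $p_{X,Y\mid W=w}$ increases $h(X\mid w)$ and $h(Y\mid w)$ but also increases $h(X,Y\mid w)$, which enters with the opposite sign, so the two effects compete and a genuinely global argument is required.

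\emph{Main step: doubling and factorization of convex envelopes.} Write $\inf_W\Phi(W)$ as the lower convex envelope of the single-letter map $q\mapsto h_q(X)+h_q(Y)-(1+\lambda)h_q(X,Y)$ at the law of $(X,Y)$, and run the factorization-of-convex-envelopes approach in the spirit of Geng--Nair. Take two independent copies $(X_1,Y_1,W_1)$, $(X_2,Y_2,W_2)$ and apply the orthogonal map $(X_1,X_2)\mapsto\big(\tfrac{X_1+X_2}{\sqrt2},\tfrac{X_1-X_2}{\sqrt2}\big)$, and likewise to $(Y_1,Y_2)$; each rotated pair again has covariance $K$. Invertibility of the rotation plus additivity of differential entropy over the now-independent coordinates splits $\Phi(W_1)+\Phi(W_2)$ into $\Phi$ evaluated on the two rotated source pairs with enlarged auxiliaries, the only slack being the steps $h(B\mid A,W_{12})\ge h(B\mid A,C,W_{12})$ and its mirror image (in self-explanatory notation for the rotated variables). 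Evaluating at an optimizer -- whose existence I would secure by a standard weak-$*$ compactness/truncation argument, or else work along a near-optimizing sequence -- forces these slack steps to be equalities, i.e.\ conditional-independence relations such as $B\perp C\mid(A,W_{12})$. Iterating across the tensor power and invoking the factorization identity collapses the optimizing triple to one invariant under the rotation group, and a perturbation (de Bruijn / added-Gaussian-noise) step excludes degenerate non-Gaussian fixed points, so the optimizing $(X,Y,W)$ is jointly Gaussian.

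\emph{Conclusion.} Once the optimizing triple is jointly Gaussian, $p_{X,Y\mid W=w}$ is $\mathcal N(\cdot,K')$ with $K'=K-K_{(X,Y)W}K_W^{-1}K_{W(X,Y)}$ independent of $w$ and $0\preceq K'\preceq K$, and $\Phi$ at this optimizer equals $\psi(K')=h(X')+h(Y')-(1+\lambda)h(X',Y')$ for $(X',Y')\sim\mathcal N(0,K')$. Since a general source with covariance $K$ can only be worse than the Gaussian one, the inequality \eqref{Eq-Thm:Hypercontract} follows with this $K'$.

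\emph{Main obstacle.} The delicate part is the implication ``tightness in the doubled inequality $\Rightarrow$ jointly Gaussian optimizer'': turning the equality/Markov conditions into a full Gaussianity statement needs the perturbation step together with the limiting/symmetrization argument over the tensor power, and these must be carried out carefully because the auxiliary need not be bounded or compactly supported, so the lower semicontinuity and boundedness that underpin the convex-envelope machinery have to be verified. A secondary, more routine nuisance is that although $\Phi(W)$ equals, up to the additive constant $-\lambda h(X,Y)$, the Gray--Wyner Lagrangian $\inf_W[I(X;Y\mid W)+\lambda I(X,Y;W)]$, that constant depends on the full source law and not only on $K$, so one cannot simply transplant an off-the-shelf Gaussian-optimality result and must push the argument through directly.
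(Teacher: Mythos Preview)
Your proposal is correct and follows essentially the same route as the paper. The paper's proof cites \cite{Hyper_Gauss} (a hypercontractivity result proved via exactly the Geng--Nair factorization-of-convex-envelopes machinery you outline) for the Gaussian-optimality step, and supplements it with the existence/lower-semicontinuity argument you allude to as ``standard weak-$*$ compactness''; the paper makes that step explicit via Posner's lower semicontinuity of relative entropy for the $I(X;Y)$ term and a Gaussian-perturbation trick for the $-\lambda(h(X)+h(Y))$ term, together with a Carath\'eodory bound $|\mathcal W|\le 4$.
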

\begin{proof}
The theorem is a consequence of \cite[Theorem~2]{Hyper_Gauss}, for a specific choice of $p=\frac{1}{\lambda}+1$. The rest of the proof is given in Appendix \ref{app:HyperGauss}.
\end{proof}

Let $K$ be the covariance matrix with unity entries in the main diagonal and $\rho$ entries in the off-diagonal.

\begin{lemma} \label{lem:lemmasymmery}
For $(X^{\prime},Y^{\prime})\sim \mathcal{N}(0,K^{\prime})$, the following inequality holds
\begin{align}
\min_{K^{\prime}: 0 \preceq K^{\prime} \preceq \begin{pmatrix} 1 & \rho \\ \rho &1  \end{pmatrix}} & h(X^{\prime})+h(Y^{\prime})-(1+\lambda)h(X^{\prime},Y^{\prime}) \\
& \hspace{-5em} \geq  \frac{1}{2} \log{\frac{1}{1-\lambda^2}}-\frac{\lambda}{2} \log{(2\pi e)^2\frac{(1-\rho)^2(1+\lambda)}{1-\lambda}},
\end{align}
where $\lambda \leq \rho$.
\end{lemma}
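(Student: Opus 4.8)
The plan is to reduce the matrix minimization to a scalar problem by diagonalizing in the right basis and then optimizing explicitly. First, I would parametrize the feasible set. Any $K' = \begin{pmatrix} a & c \\ c & b \end{pmatrix}$ with $0 \preceq K' \preceq K$ where $K = \begin{pmatrix} 1 & \rho \\ \rho & 1 \end{pmatrix}$. For a jointly Gaussian pair, $h(X') = \frac12\log(2\pi e\,a)$, $h(Y') = \frac12\log(2\pi e\,b)$, and $h(X',Y') = \frac12\log\big((2\pi e)^2(ab - c^2)\big)$, so the objective becomes
\begin{align}
f(a,b,c) &= \tfrac12\log(2\pi e\,a) + \tfrac12\log(2\pi e\,b) - \tfrac{1+\lambda}{2}\log\big((2\pi e)^2(ab-c^2)\big) \nonumber\\
&= -\tfrac{\lambda}{2}\log(2\pi e)^2 + \tfrac12\log(ab) - \tfrac{1+\lambda}{2}\log(ab - c^2).
\end{align}
For fixed $a,b$ with $ab$ fixed, increasing $c^2$ decreases $ab - c^2$, which (since $1+\lambda > 0$) \emph{increases} $f$; hence the minimizer wants $c^2$ as large as the constraint $K \succeq K'$ permits. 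This suggests the optimum is attained on the boundary $K' \preceq K$ with $K' $ rank-deficient relative to $K$, i.e. $K - K'$ singular.

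Second, I would exploit symmetry. The problem is invariant under swapping the two coordinates, so I expect a symmetric optimizer $a = b$; I would either invoke a convexity/symmetrization argument or simply verify after the fact. Setting $a = b = t$ and pushing $c$ to its extreme, the constraint $0 \preceq K' \preceq K$ together with $a=b=t$ forces (after a short eigenvalue computation, diagonalizing both matrices in the basis $\frac{1}{\sqrt2}(1,1), \frac{1}{\sqrt2}(1,-1)$) the eigenvalues of $K'$ to be $t + c$ and $t - c$ with $0 \le t - c$ and $t + c \le 1 + \rho$, $t - c \le 1 - \rho$ (reading off $K$'s eigenvalues $1+\rho$, $1-\rho$). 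The objective in these variables, writing $u = t+c$, $v = t - c$, is
\begin{align}
f &= -\tfrac{\lambda}{2}\log(2\pi e)^2 + \tfrac12\log\!\Big(\tfrac{(u+v)^2}{4}\Big) - \tfrac{1+\lambda}{2}\log(uv).
\end{align}
Minimizing over $0 \le v \le 1-\rho$ and $v \le u \le 1+\rho$: $\partial f/\partial u < 0$ when $u$ is small and the derivative analysis (or the earlier observation that we want $ab - c^2 = uv$ small and $ab = \frac{(u+v)^2}{4}$ not too small) pushes $v$ to the boundary $v = 1 - \rho$ and then drives $u$ up to $u = 1+\rho$ — but one must check whether the $\frac12\log\frac{(u+v)^2}{4}$ term (which grows in $u$) outweighs the $-\frac{1+\lambda}{2}\log(uv)$ term (which also grows, with larger coefficient). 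Since $1+\lambda > 1$, for large $u$ the second term dominates and $f$ decreases, so $u = 1+\rho$ is optimal; then substituting $u = 1+\rho$, $v = 1-\rho$ gives $uv = 1-\rho^2$, $\frac{(u+v)^2}{4} = 1$, yielding $f = -\frac{\lambda}{2}\log(2\pi e)^2 - \frac{1+\lambda}{2}\log(1-\rho^2)$, which one then massages using $1 - \rho^2 = (1-\rho)(1+\rho)$ into the stated bound $\frac12\log\frac{1}{1-\lambda^2} - \frac{\lambda}{2}\log\big((2\pi e)^2\frac{(1-\rho)^2(1+\lambda)}{1-\lambda}\big)$ — here the hypothesis $\lambda \le \rho$ must enter to guarantee the algebraic identity/inequality goes the right way and that the claimed boundary point is genuinely feasible and optimal rather than an interior critical point.

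The main obstacle I anticipate is the boundary analysis: confirming rigorously that the minimum over the spectrahedron $\{0 \preceq K' \preceq K\}$ is attained at the specific extreme point (rather than at an interior stationary point or a different face), and in particular justifying the reduction to $a = b$ and then to the single corner $(u,v) = (1+\rho, 1-\rho)$. I would handle this by writing $f$ as a function on the $2$-dimensional feasible region after symmetrization, checking that it has no interior critical point (the gradient in $u$ has a fixed sign once $v$ is at its boundary, because of the $1+\lambda$ coefficient), and carefully tracking where the constraint $\lambda \le \rho$ is used — I expect it is needed precisely to ensure $\frac{1}{1-\lambda^2}$-type terms remain positive and that the final rearrangement into the stated form is valid. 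The rest is bookkeeping with logarithms.
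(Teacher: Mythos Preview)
Your derivative analysis is where the argument breaks. After you reduce to the symmetric slice and write
\[
f(u,v) = -\tfrac{\lambda}{2}\log(2\pi e)^2 + \tfrac12\log\tfrac{(u+v)^2}{4} - \tfrac{1+\lambda}{2}\log(uv),
\]
you claim that since $1+\lambda>1$ the second term dominates and $f$ is decreasing in $u$, so $u=1+\rho$ is optimal. But $\partial f/\partial u = \tfrac{1}{u+v}-\tfrac{1+\lambda}{2u}$ changes sign: it is negative at $u=v$ and positive for large $u$, with an interior minimum at $u^\star=\tfrac{1+\lambda}{1-\lambda}\,v$. With $v=1-\rho$ this critical point satisfies $u^\star\le 1+\rho$ precisely when $\lambda\le\rho$, so it is feasible, and it gives a strictly smaller value of $f$ than your corner $(u,v)=(1+\rho,1-\rho)$ (i.e., $K'=K$). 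The paper's minimizer is exactly this interior point: in their variables, $q_\star=\lambda$ and $\sigma_\star^2=\tfrac{1-\rho}{1-\lambda}$, which corresponds to $(u^\star,v)=\bigl(\tfrac{(1+\lambda)(1-\rho)}{1-\lambda},\,1-\rho\bigr)$. Your corner value $f(\lambda,1,\rho)$ is strictly larger than the stated bound when $\lambda<\rho$, so the ``massaging'' you defer would not succeed as an identity. (Relatedly, your earlier sentence about $c^2$ is internally inconsistent: you correctly note that increasing $c^2$ \emph{increases} $f$, but then conclude the minimizer wants $c^2$ large.)

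A second gap is the reduction to $a=b$. The paper does not invoke symmetry of the minimizer directly; instead it relaxes the constraint set via $\sigma_X^2+\sigma_Y^2\ge 2\sigma_X\sigma_Y$ (AM--GM), which enlarges the feasible region and makes the objective depend only on the product $\sigma_X\sigma_Y$ and on the correlation $q$. This yields a valid lower bound without needing to argue that a symmetric $K'$ is optimal. Your ``convexity/symmetrization'' placeholder would need a concrete argument of this kind. Once you have the two scalar variables, the paper splits into the two faces $q\le\rho$ and $q>\rho$, finds the KKT point $q_\star=\lambda$ on the first face, and then shows the second face gives a larger value when $\lambda\le\rho$---this last comparison is exactly where the hypothesis $\lambda\le\rho$ enters.
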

\begin{proof}
The proof is given in Appendix \ref{App:lowerboundmainRWCI}.
\end{proof}

\subsection{Lower Bound}
First, we observe that for mean-squared error, the source variance is irrelevant: A scheme attaining distortion $\Delta$ for sources of variance $\sigma^2$
is a scheme attaining distortion $\Delta/\sigma^2$ on unit-variance sources, and vice versa.
Therefore, for ease of notation, in the sequel, we assume that the sources are of unit variance.
Then, we can bound:
\begin{align}
&\mathrm{R}_{\Delta, \alpha}(X, Y)\\
&= \inf_{ \substack{W,\hat{X},\hat{Y}:I(X;\hat{X}|W)+I(Y;\hat{Y}|W) \leq \alpha\\ {\mathbb E}[(X-\hat{X})^2] \leq \Delta \\ {\mathbb E}[(Y-\hat{Y})^2] \leq \Delta }} I(X,Y;W)  \\
& \geq \inf_{ \substack{W,\hat{X},\hat{Y}: {\mathbb E}[(X-\hat{X})^2] \leq \Delta \\ {\mathbb E}[(Y-\hat{Y})^2] \leq \Delta }} I(X,Y;W) + \nu I(X;\hat{X}|W) \nonumber  \\
& \quad \quad +\nu(I(Y;\hat{Y}|W) - \alpha) \label{eqn:weakduality} \\
& = \inf_{ \substack{W,\hat{X},\hat{Y}: {\mathbb E}[(X-\hat{X})^2] \leq \Delta \\ {\mathbb E}[(Y-\hat{Y})^2] \leq \Delta }} \hspace{-1em} h(X,Y) - \nu \alpha +\nu (h(X|W)+h(Y|W)) \nonumber \\
& \quad \quad -h(X,Y|W)  -\nu(h(X|W,\hat{X})+h(Y|W,\hat{Y})) \\
& \geq   h(X,Y) - \nu \alpha + \nu \inf_W h(X|W)+h(Y|W)- \frac{1}{\nu}h(X,Y|W) \nonumber \\
& \quad \quad  + \inf_{ \substack{W,\hat{X},\hat{Y}: {\mathbb E}[(X-\hat{X})^2] \leq \Delta \\ {\mathbb E}[(Y-\hat{Y})^2] \leq \Delta }} -\nu(h(X|W,\hat{X})+h(Y|W,\hat{Y})) \label{eqn:minsplit} \\
& \geq h(X,Y) - \nu \alpha + \nu \cdot \hspace{-2em} \min_{0 \preceq K^{\prime} \preceq \begin{pmatrix} 1 & \rho \\ \rho &1  \end{pmatrix}} \hspace{-2em} h(X^{\prime})+h(Y^{\prime})- \frac{1}{\nu}h(X^{\prime},Y^{\prime})  \nonumber \\
& +\nu \cdot \left( \min_{ \substack{(W,\hat{X},\hat{Y}) \in \mathcal{P}_G: \\ {\mathbb E}[(X-\hat{X})^2] \leq \Delta }} \hspace{-1.5em} -h(X|W,\hat{X})+  \min_{ \substack{(W,\hat{X},\hat{Y}) \in \mathcal{P}_G: \\ {\mathbb E}[(Y-\hat{Y})^2] \leq \Delta }} \hspace{-1.5em} -h(Y|W,\hat{Y}) \right)   \label{eqn:twobound} \\
& = h(X,Y) - \nu \alpha -\nu \log{(2 \pi e \Delta)}  \nonumber \\
& \quad \quad + \nu \cdot \hspace{-1.5em} \min_{0 \preceq K^{\prime} \preceq \begin{pmatrix} 1 & \rho \\ \rho &1  \end{pmatrix}} h(X^{\prime})+h(Y^{\prime})- \frac{1}{\nu}h(X^{\prime},Y^{\prime}) \label{eqn:twoboundeval} \\
& = \frac{1}{2} \log{(2\pi e)^2(1-\rho^2)} - \nu \alpha -\nu \log{(2 \pi e \Delta)} \nonumber \\
& \quad \quad  + \frac{\nu}{2} \log{\frac{\nu^2}{2\nu-1}}-\frac{1-\nu}{2} \log{(2\pi e)^2\frac{(1-\rho)^2}{2\nu-1}} \label{eqn:hypereval2} \\
& = \left\{ \begin{array}{lr} \frac{1}{2} \log^{+}{\frac{1+\rho}{2\Delta e^{\alpha}+\rho-1}}, &   \mbox{ if } 1-\rho \le \Delta e^{\alpha} \le 1  \\
 \frac{1}{2} \log^{+}{\frac{1-\rho^2}{\Delta^2e^{2\alpha}}} , &   \mbox{ if }  \Delta e^\alpha \le 1-\rho.
\end{array} \right. \label{eqn:lastGrayWyner}
\end{align}
where~\eqref{eqn:weakduality} follows from weak duality for $ \nu \geq 0;$ \eqref{eqn:minsplit} follows from bounding the infimum of the sum with the sum of the infima of its summands, and the fact that relaxing the constraints cannot increase the value of the infimum;
\eqref{eqn:twobound} follows from Theorem~\ref{Thm:Hypercontract} where $\nu:=\frac{1}{1+\lambda}$ and for the constraint $0 \leq \lambda<1$ (indeed we can also include zero) to be satisfied we need $ \frac{1}{2} <\nu \leq1$ and \cite[Lemma~1]{Thomas} on each of the terms;
\eqref{eqn:twoboundeval} follows by observing
\begin{align}
h(X|W,\hat{X}) & = h(X-\hat{X}|W,\hat{X}) \\
  & \le h(X-\hat{X}) \\
  & \le \frac{1}{2} \log (2 \pi e \Delta),
\end{align}
where the last step is due to the fact that ${\mathbb E}[(X-\hat{X})^2] \leq \Delta$; (\ref{eqn:hypereval2}) follows from Lemma \ref{lem:lemmasymmery} for $\nu \geq \frac{1}{1+\rho}$; and (\ref{eqn:lastGrayWyner}) follows from maximizing 
\begin{align}
&\ell(\nu):=\frac{1}{2} \log{(2\pi e)^2(1-\rho^2)} - \nu \alpha -\nu \log{(2 \pi e \Delta)} \\
&\quad \quad \quad  + \frac{\nu}{2} \log{\frac{\nu^2}{2\nu-1}}-\frac{1-\nu}{2} \log{(2\pi e)^2\frac{(1-\rho)^2}{2\nu-1}},
\end{align}
for $1 \geq \nu \geq \frac{1}{1+\rho}$. Now we need to choose the tightest bound $\max_{1 \geq \nu \geq \frac{1}{1+\rho}} \ell(\nu)$. Note that the function $\ell$ is concave since
\begin{align}
\frac{\partial^2 \ell}{\partial \nu^2}=-\frac{1}{\nu(2\nu-1)}<0.
\end{align}
Since it also satisfies monotonicity
\begin{align}
\frac{\partial \ell}{\partial \nu}=\log{\frac{\nu(1-\rho)}{(2\nu-1)\Delta e^{\alpha}}},
\end{align}
its maximal value occurs when the derivative vanishes, that is, when $\nu_*=\frac{\Delta e^{\alpha}}{2\Delta e^{\alpha}-1+\rho}.$ Substituting for the optimal $\nu_*$ we get
\begin{align}
\mathrm{R}_{\Delta, \alpha}(X, Y) \geq \ell \left(\frac{\Delta e^{\alpha}}{2\Delta e^{\alpha}-1+\rho} \right) =\frac{1}{2} \log^+ \frac{1+\rho}{2 \Delta e^{\alpha}-1+\rho},
\end{align}
for $1 \geq \nu_* \geq \frac{1}{1+\rho}$, which means the expression is valid for $1-\rho \leq \Delta e^{\alpha} \leq 1$. 

The other case is $ \Delta e^{\alpha} \leq 1-\rho$. In this case note that $\nu(1-\rho) \geq \nu \Delta e^{\alpha} \geq (2\nu-1)\Delta e^\alpha$ for $\nu \leq 1$. This implies $\frac{\nu(1-\rho)}{(2\nu-1)\Delta e^{\alpha}} \geq 1$, thus we have $\frac{\partial \ell}{\partial \nu} \geq 0$. Since the function is concave and increasing the maximum is attained at $\nu_*=1$, thus 
\begin{align}
\mathrm{R}_{\Delta , \alpha}(X, Y) \geq \ell \left(1 \right) =\frac{1}{2} \log^+ \frac{1-\rho^2}{\Delta^2e^{2\alpha}},
\end{align}
where the expression is valid for $\Delta e^{\alpha} \leq 1-\rho$.
As stated at the beginning of the proof, this is the correct formula assuming unit-variance sources.
For sources of variance $\sigma^2,$ it suffices to replace $\Delta$ with $\Delta/\sigma^2,$ which leads to the expression given in the theorem statement.

\subsection{Upper Bound}
The upper bound follows by plugging in jointly Gaussian random variables, that was derived in in~\cite[Theorem 4.3]{EPFL8001}.

\section{Conclusion and Discussion}
For the Gaussian lossy Gray-Wyner network under mean-squared error distortion, the rate region $(R_{u,x}+R_{u,y},R_c)$, which is the sum-rate of the private channels and the rate of the common channel is fully characterized. 

\appendices 

\section{Proof of Theorem \ref{Thm:Hypercontract}} \label{app:HyperGauss}
The techniques to establish the optimality of Gaussian distributions is used in \cite{Geng--Nair} and is known as factorization of lower convex envelope. Let us define the following object
\begin{align} \label{eqn:convexenv_object}
V(K)&=\inf_{(X,Y):K_{(X,Y)}=K} \inf_W h(Y|W)+h(X|W) \\
& \quad \quad - (1+\lambda) h(X, Y|W),
\end{align}
where $\lambda$ is a real number, $0<\lambda<1$  and $K$ is an arbitrary covariance matrix. Let $\ell_{\lambda}(X,Y)=h(Y)+h(X) - (1+\lambda) h(X,Y)$, and $\breve{\ell}_{\lambda}(X,Y) =\inf_W h(Y|W)+h(X|W) - (1+\lambda) h(X, Y|W)$, where $\breve{\ell}_{\lambda}(X,Y)$ is the lower convex envelope of $\ell_{\lambda}(X,Y)$. 

First, we prove that the infimum is attained, then we prove that a Gaussian $W$ attains the infimum in Equation~\eqref{eqn:convexenv_object}. Together, these two arguments establish Theorem \ref{Thm:Hypercontract}.

\subsection{The infimum in Equation~\eqref{eqn:convexenv_object} is attained}\label{app:HyperGauss:infattained}
\begin{proposition}[Proposition 17 in \cite{Geng--Nair}]
Consider a sequence of random variables $\left\{X_n,Y_n\right\}$ such that $K_{(X_n,Y_n)}\preceq K$ for all $n$, then the sequence is tight.
\end{proposition}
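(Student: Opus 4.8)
The plan is to establish tightness of the sequence $\{(X_n,Y_n)\}$ directly from the uniform covariance bound $K_{(X_n,Y_n)}\preceq K$, which is exactly the content of Proposition 17 in~\cite{Geng--Nair}. Tightness of a family of probability measures on $\Real^2$ means: for every $\varepsilon>0$ there is a compact set (here, a large closed ball or box) that carries at least $1-\varepsilon$ of the mass of every member of the family. The natural tool is a second-moment (Chebyshev/Markov) argument, since the covariance bound controls precisely the second moments.

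First I would note that $K_{(X_n,Y_n)}\preceq K$ implies, by looking at the diagonal entries (equivalently, evaluating the quadratic form on the standard basis vectors $e_1,e_2$), that $\E[X_n^2]=e_1^T K_{(X_n,Y_n)} e_1 \le e_1^T K e_1 =: \kappa_1$ and $\E[Y_n^2] \le \kappa_2$, uniformly in $n$. One may also want to be careful about means: the objects $h(X|W)$ etc.\ are translation-invariant, so without loss of generality (or by the stated conventions) the variables are zero-mean, and the covariance matrix coincides with the matrix of second moments; in any case $\E[X_n^2]+\E[Y_n^2]\le \tr(K)=:M$ uniformly. Then for $r>0$, Markov's inequality gives
\begin{align}
\P\big(\|(X_n,Y_n)\| > r\big) \le \frac{\E[X_n^2+Y_n^2]}{r^2} \le \frac{M}{r^2},
\end{align}
uniformly in $n$. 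Given $\varepsilon>0$, choosing $r=r_\varepsilon:=\sqrt{M/\varepsilon}$ makes the right-hand side at most $\varepsilon$, so the compact ball $\bar B(0,r_\varepsilon)$ satisfies $\P((X_n,Y_n)\in \bar B(0,r_\varepsilon)) \ge 1-\varepsilon$ for all $n$. This is precisely the definition of tightness, so the proof concludes.

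There is essentially no main obstacle here: the statement is a standard consequence of uniformly bounded second moments, and the covariance ordering $K_{(X_n,Y_n)}\preceq K$ hands us exactly that bound. The only points requiring a word of care are (i) making explicit that $\preceq$ on covariance matrices implies the entrywise (diagonal) bound used above, which is immediate from the definition of positive semidefiniteness applied to coordinate vectors, and (ii) the mean-centering remark, which is harmless because the functional in~\eqref{eqn:convexenv_object} is invariant under deterministic translations of $(X,Y)$, so one may restrict attention to zero-mean variables and identify the covariance matrix with the second-moment matrix. With these remarks in place, the Markov-inequality argument above gives tightness in one line, and by Prokhorov's theorem this is the ingredient needed to extract a weakly convergent subsequence in the subsequent step showing the infimum in~\eqref{eqn:convexenv_object} is attained.
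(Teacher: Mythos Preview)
Your argument is correct and is the standard route: the ordering $K_{(X_n,Y_n)}\preceq K$ gives a uniform bound on $\E[X_n^2+Y_n^2]\le \tr(K)$, and Markov's inequality then yields tightness directly. Note that the paper does not actually supply its own proof of this proposition; it simply quotes it as Proposition~17 of~\cite{Geng--Nair} and uses it as a black box, so there is nothing further to compare against.
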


\begin{theorem}[Prokhorov] \label{thm:weakconvergence}
If $\left\{X_n,Y_n\right\}$ is a tight sequence then there exists a subsequence $\left\{X_{n_i},Y_{n_i}\right\}$ and a limiting probability distribution $\left\{X_*,Y_*\right\}$ such that $\left\{X_{n_i},Y_{n_i}\right\} \overset{w}{\Rightarrow} \left\{X_*,Y_*\right\}$ converges weakly in distribution.
\end{theorem}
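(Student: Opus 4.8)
The plan is to establish this classical result by a diagonal extraction argument applied to the joint distribution functions (in the spirit of Helly's selection theorem), using the tightness hypothesis precisely at the point where one must rule out the escape of probability mass to infinity. Write $P_n$ for the law of $(X_n,Y_n)$ on $\mathbb{R}^2$ and $F_n(x,y)=\P[X_n\le x,\ Y_n\le y]$ for the corresponding joint distribution function; each $F_n$ is non-decreasing in each coordinate, right-continuous, and takes values in $[0,1]$. The working form of tightness is that for every $\epsilon>0$ there is a compact rectangle $K_\epsilon\subset\mathbb{R}^2$ with $P_n(K_\epsilon)\ge 1-\epsilon$ for all $n$.

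First I would extract the subsequence. Let $D=\mathbb{Q}^2$, a countable dense set. For each fixed $q\in D$ the sequence $\{F_n(q)\}_n$ lies in the compact interval $[0,1]$, so by a diagonalization over the countably many points of $D$ there is a subsequence $\{n_i\}$ and a function $G:D\to[0,1]$ with $F_{n_i}(q)\to G(q)$ for every $q\in D$. Define $F_*(x,y)=\inf\{G(q): q\in D,\ q\ge(x,y)\ \text{coordinatewise}\}$. Monotonicity of the $F_n$ passes to $G$ and hence renders $F_*$ non-decreasing and right-continuous, and a standard squeezing argument (bracketing $(x,y)$ between nearby rational points) shows $F_{n_i}(x,y)\to F_*(x,y)$ at every point at which $F_*$ is continuous; the complement of the continuity set is contained in countably many coordinate lines.

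The crucial step is to show that $F_*$ is a genuine two-dimensional distribution function, i.e.\ that it carries total mass one; this is exactly the content that tightness supplies, since the diagonal extraction by itself only guarantees a sub-stochastic limit. Fix $\epsilon>0$ and choose a large rectangle $R=[-M,M]^2\supseteq K_\epsilon$ whose four corners are continuity points of $F_*$ (possible since only countably many coordinate values are excluded). At these corners $F_{n_i}\to F_*$, so the probability $P_{n_i}(R)$, which is the alternating sum of $F_{n_i}$ over the four corners of $R$, converges to the same alternating sum of $F_*$. Tightness gives $P_{n_i}(R)\ge P_{n_i}(K_\epsilon)\ge 1-\epsilon$ for all $i$, whence the corresponding increment of $F_*$ is at least $1-\epsilon$; letting $\epsilon\to0$ (and $M\to\infty$) forces the total increment to equal one. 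Thus $F_*$ is the distribution function of a probability law on $\mathbb{R}^2$, which we realize by a random vector $(X_*,Y_*)$.

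Finally, the pointwise convergence $F_{n_i}(x,y)\to F_*(x,y)$ at every continuity point of $F_*$ is, by the portmanteau characterization of weak convergence in $\mathbb{R}^2$, equivalent to $(X_{n_i},Y_{n_i})\overset{w}{\Rightarrow}(X_*,Y_*)$, which is the assertion. I expect the third step to be the main obstacle: the diagonal/Helly argument is routine and produces a right-continuous monotone limit automatically, but that limit is a priori only sub-stochastic, and converting ``no mass escapes to infinity'' into the exact normalization that $F_*$ has total mass one is the sole place where the tightness hypothesis is indispensable.
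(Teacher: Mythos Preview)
The paper does not prove this statement at all: it is simply cited as Prokhorov's theorem, a classical result from probability theory, and then invoked as a black box in the argument of Appendix~\ref{app:HyperGauss}. There is therefore no ``paper's own proof'' to compare against.

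Your argument is the standard and correct proof of this direction of Prokhorov's theorem in $\mathbb{R}^2$, via Helly's selection principle on a countable dense set followed by the use of tightness to upgrade the sub-stochastic limit to a genuine probability distribution. One minor technical remark: in your second step you verify that $F_*$ is coordinatewise non-decreasing and right-continuous, but to conclude that $F_*$ is a bona fide bivariate distribution function you also need the $2$-increasing property (non-negativity of the alternating corner sums over rectangles). You implicitly rely on this in your third step when you interpret the alternating sum as a probability, and it does pass to the limit from the $F_{n_i}$ without trouble, but strictly speaking it should be mentioned alongside monotonicity.
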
 

Note that $\ell_{\lambda}(X,Y)=h(Y)+h(X) - (1+\lambda) h(X,Y)$ can be written as $(1+\lambda)I(X;Y)-\lambda[h(X)+h(Y)]$. Thus, it is enough to show that this expression is lower semi-continuous. We will show by utilizing the following theorem.

\begin{theorem}[\cite{Posner}]
If $p_{X_n,Y_n} \overset{w}{\Rightarrow} p_{X,Y}$ and $q_{X_n,Y_n} \overset{w}{\Rightarrow} q_{X,Y}$, then $D(p_{X,Y}||q_{X,Y}) \leq \liminf\limits_{n \to \infty} D(p_{X_n,Y_n}||q_{X_n,Y_n})$.
\end{theorem}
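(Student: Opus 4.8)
The approach I would take is via the Donsker--Varadhan (Gibbs) variational representation of relative entropy. On a Polish space (here $\mathbb{R}^2$, which carries all of $p_{X,Y}$, $q_{X,Y}$ and the two sequences $p_{X_n,Y_n}$, $q_{X_n,Y_n}$), one has
\begin{align}
D(p\|q) = \sup_{g \in C_b} \Bigl[\, \int g\, dp - \log \int e^{g}\, dq \,\Bigr],
\end{align}
where $C_b$ denotes the bounded continuous functions on $\mathbb{R}^2$. The plan is to exhibit $D(\cdot\|\cdot)$ as a pointwise supremum of a family of functionals of $(p,q)$ that are each continuous with respect to weak convergence; a supremum of weakly continuous functionals is weakly lower semicontinuous, and lower semicontinuity along weakly convergent sequences is exactly the asserted inequality.

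For the execution, I would fix $g \in C_b$ and set $\Lambda_g(p,q) := \int g\, dp - \log \int e^{g}\, dq$. Since $g$ is bounded and continuous, so is $e^{g}$, and $\int e^{g}\, dq \ge e^{-\|g\|_\infty} > 0$; hence by the definition of weak convergence $\int g\, dp_n \to \int g\, dp$ and $\int e^{g}\, dq_n \to \int e^{g}\, dq$, and continuity of $\log$ on $(0,\infty)$ gives $\Lambda_g(p_n,q_n) \to \Lambda_g(p,q)$. On the other hand, the elementary (Jensen) half of the variational identity gives the lower bound $D(p_n\|q_n) \ge \Lambda_g(p_n,q_n)$ for every $n$, with no topological hypotheses needed. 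Taking $\liminf_n$ yields $\liminf_n D(p_n\|q_n) \ge \Lambda_g(p,q)$, and since $g \in C_b$ was arbitrary,
\begin{align}
\liminf_{n\to\infty} D(p_n\|q_n) \ge \sup_{g\in C_b} \Lambda_g(p,q) = D(p\|q)
\end{align}
by the variational representation. This is the claim.

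The step that is not entirely routine — and the one I expect to be the crux — is justifying the variational representation with the supremum restricted to \emph{bounded continuous} $g$ rather than all bounded measurable $g$. The inequality ``$\ge$'' there (the Donsker--Varadhan upper bound on $\Lambda_g$) is immediate from Jensen. For the reverse: when $D(p\|q)<\infty$ one takes $g$ to be a truncation followed by a mollification of $\log\frac{dp}{dq}$ and passes to the limit by monotone and dominated convergence; when $D(p\|q)=+\infty$ one verifies directly that the $C_b$-supremum is also infinite (choosing $g$ large on a set carrying much $p$-mass and little $q$-mass). All of this is classical on Polish spaces — it is the standard weak-convergence treatment of relative entropy — and since the statement is in any case attributed to Posner, one may instead simply invoke the cited reference. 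Everything else in the argument is only the portmanteau characterization of weak convergence together with continuity of $\log$.
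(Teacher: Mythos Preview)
Your argument is correct: the Donsker--Varadhan variational formula exhibits $D(\cdot\|\cdot)$ as a supremum of functionals that are each continuous under weak convergence, so lower semicontinuity follows immediately, and you have identified the only nontrivial point (that the supremum over $C_b$ suffices).

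The paper, however, does not prove this statement at all; it simply imports it as a classical result from the cited reference~\cite{Posner} and uses it as a black box. So there is no comparison of approaches to make---your proposal is strictly more than what the paper does, supplying a self-contained proof where the paper only quotes the literature. In that sense your write-up is perfectly acceptable, and your closing remark (``one may instead simply invoke the cited reference'') is in fact exactly what the authors chose to do.
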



Observe that $I(X;Y)=D(p_{X,Y}||q_{X,Y})$, where $q_{X,Y}=p_Xp_Y$. For the theorem to hold we need to check the assumptions.
First, from Theorem~\ref{thm:weakconvergence}, we have $p_{X_n,Y_n} \overset{w}{\Rightarrow} p_{X,Y}.$
Second, since the marginal distributions converge weakly if the joint distribution converges weakly, we also have $q_{X_n,Y_n} \overset{w}{\Rightarrow} q_{X,Y}.$
Therefore,
\begin{align} \label{eqn:semicont1}
I(X;Y) \leq \liminf\limits_{n \to \infty} I(X_n;Y_n).
\end{align}
To preserve the covariance matrix $K_{(X,Y)}$, there are three degrees of freedom plus one degree of freedom coming from minimizing the objective, thus $|\mathcal{W}| \leq 4$ is enough to attain the minimum.  

Let us introduce $\delta > 0$ and define $N_{\delta}\sim \mathcal{N}(0,\delta)$, being independent of $\{X_n\},X,\{Y_n\}$ and $Y$. From the entropy power inequality, we have 
\begin{align} \label{eqn:semicont2}
h(X_n+N_{\delta}) &\geq h(X_n) \\
h(Y_n+N_{\delta}) &\geq h(Y_n),
\end{align}
and moreover, for Gaussian perturbations, we have
\begin{align} \label{eqn:semicont3}
\liminf_{n\to \infty} h(X_n+N_{\delta}) =h(X+N_{\delta}).
\end{align}
This results in 
\begin{align}
& \liminf_{n \to \infty} \ell_{\lambda}(X_n,Y_n) \nonumber \\
&=\liminf_{n \to \infty} (1+\lambda)I(X_n;Y_n)-\lambda[h(X_n)+h(Y_n)] \\
& \geq \liminf_{n \to \infty} (1+\lambda)I(X_n;Y_n)-\lambda[h(X_n+N_{\delta})+h(Y_n+N_{\delta})] \label{eqn:EPIbound} \\
& \geq (1+\lambda)I(X;Y)-\lambda[h(X+N_{\delta})+h(Y+N_{\delta})], \label{eqn:semicont4}
\end{align}
where (\ref{eqn:EPIbound}) follows from (\ref{eqn:semicont2}) and (\ref{eqn:semicont4}) follows from (\ref{eqn:semicont1}), (\ref{eqn:semicont3}). Letting $\delta \to 0$, we obtain the weak semicontinuity of our object $\liminf_{n \to \infty} \ell_{\lambda}(X_n,Y_n) \geq  \ell_{\lambda}(X,Y)$.

\subsection{A Gaussian auxiliary $W$ attains the infimum in Equation~\eqref{eqn:convexenv_object}}\label{app:HyperGauss:Gauss}
This proof follows from~\cite{Hyper_Gauss}.

\section{Proof of Lemma \ref{lem:lemmasymmery}} \label{App:lowerboundmainRWCI}
Let $K^{\prime}$ be parametrized as $K^{\prime} = \begin{pmatrix} \sigma^2_X & q\sigma_X \sigma_Y \\ q\sigma_X \sigma_Y & \sigma^2_Y  \end{pmatrix} \succeq 0$. Then, the problem is the same to the following one

\begin{align}
&\min_{K^{\prime}: 0 \preceq K^{\prime} \preceq \begin{pmatrix} 1 & \rho \\ \rho &1  \end{pmatrix}} h(X^{\prime})+h(Y^{\prime})-(1+\lambda)h(X^{\prime},Y^{\prime}) \nonumber \\
& \quad \quad = \min_{(\sigma_X,\sigma_Y,q) \in \mathcal{A}_{\rho}} \frac{1}{2}\log{(2 \pi e)^2 \sigma_X^2\sigma_Y^2} \\
& \hspace{5em} -\frac{1+\lambda}{2}\log{(2 \pi e)^2 \sigma_X^2\sigma_Y^2(1-q^2)} \label{eqn:mlproof1}
\end{align}
where the set
\begin{align} 
\mathcal{A}_{\rho}=\left\{( \sigma_X,\sigma_Y,q): \begin{pmatrix} \sigma^2_X -1& q\sigma_X \sigma_Y -\rho\\ q\sigma_X \sigma_Y-\rho & \sigma^2_Y -1 \end{pmatrix} \preceq 0 \right\}.
\end{align}
Matrices of dimension $2\times2$ are negative semi-definite if and only if the trace is negative and determinant is positive. Thus, we can rewrite the set as 
\begin{align} 
\mathcal{A}_{\rho}=\left\{(\sigma_X,\sigma_Y,q): \hspace{-1.6em} \substack{\sigma^2_X+\sigma^2_Y \leq 2, \\ \quad (1-q^2)\sigma^2_X\sigma^2_Y +2\rho q \sigma_X\sigma_Y +1-\rho^2-(\sigma^2_X+\sigma^2_Y) \geq 0} \right\}.
\end{align}
By making use of $\sigma^2_X+\sigma^2_Y \geq 2\sigma_X\sigma_Y$, we derive that $\mathcal{A}_{\rho} \subseteq \mathcal{B}_{\rho}$, where
\begin{align} 
\mathcal{B}_{\rho}=\left\{ (\sigma_X,\sigma_Y,q): \hspace{-1.2em} \substack{\sigma_X\sigma_Y \leq 1, \\ \quad (1-q^2)\sigma^2_X\sigma^2_Y +2\rho q \sigma_X\sigma_Y +1-\rho^2-2\sigma_X\sigma_Y) \geq 0} \right\}.
\end{align} We will further reparametrize and define $\sigma^2=\sigma_X\sigma_Y$, thus 
\begin{align} 
\mathcal{D}_{\rho}=\left\{( \sigma^2,q): \substack{\sigma^2 \leq 1, \\ (\sigma^2(1-q)-1+\rho)(\sigma^2(1+q)-1-\rho) \geq 0} \right\}.
\end{align}
The second equation in the definition of the set $\mathcal{D}_{\rho}$ has roots $\sigma^2=\frac{1+\rho}{1+q}$ and $\sigma^2=\frac{1-\rho}{1-q}$, thus the inequality is true if $\sigma^2$ is not in between these two roots. Thus, we can rewrite the set $\mathcal{D}_{\rho}$ as
\begin{align} 
\mathcal{D}_{\rho}=\left\{( \sigma^2,q): \substack{\rho \geq q, \quad \sigma^2(1-q) \leq 1-\rho \\ \rho < q, \quad  \sigma^2(1+q) \leq 1+\rho } \label{eqn:D} \right\}.
\end{align}
Thus, we have
\begin{align}
& \min_{(\sigma_X,\sigma_Y,q) \in \mathcal{A}_{\rho}} \frac{1}{2}\log{(2 \pi e)^2 \sigma_X^2\sigma_Y^2} \\
& \quad -\frac{1+\lambda}{2}\log{(2 \pi e)^2 \sigma_X^2\sigma_Y^2(1-q^2)} \geq \min_{(\sigma^2,q) \in \mathcal{D}_{\rho}} f(\lambda,\sigma^2,q)
\label{eqn:mlproof2}
\end{align}
where,
\begin{align}f(\lambda,\sigma^2,q)&=\frac{1}{2}\log{(2 \pi e)^2 \sigma^4}-\frac{1+\lambda}{2}\log{(2 \pi e)^2 \sigma^4(1-q^2)} \label{eqn:f}.
\end{align}
For now let us assume $\rho$ is positive and start from the case $\rho \geq q$. Then, by weak duality we have 
\begin{align}
\min\limits_{(\sigma^2,q) \in \mathcal{D}_{\rho}} \hspace{-1em} f(\lambda,\sigma^2,q) \geq \min\limits_{\sigma^2,q} f(\lambda,\sigma^2,q) + \mu(\sigma^2(1-q)-1+\rho)), \label{eqn:mlproof3}
\end{align}
for any $\mu \geq 0$.
By applying Karush-Kuhn-Tucker (KKT) conditions we get
\begin{align}
\frac{\partial }{\partial \sigma^2}=-\frac{\lambda}{\sigma^2} + \mu (1-q)&=0, \label{eqn:KKT1} \\ 
\frac{\partial }{\partial q}=\frac{(1+\lambda)q}{1-q^2} - \mu \sigma^2&=0, \label{eqn:KKT2} \\
\mu(\sigma^2(1-q)-1+\rho))&=0, \label{eqn:KKT3}
\end{align}
where (\ref{eqn:KKT1}), (\ref{eqn:KKT2}) is known as stationary condition and (\ref{eqn:KKT3}) is known as complementary slackness condition. By using (\ref{eqn:KKT1}) we get 
\begin{align}
\mu=\frac{\lambda}{\sigma^2(1-q)} \label{eqn:KKT4}.
\end{align}
By using (\ref{eqn:KKT2}) we get 
\begin{align}
\mu=\frac{(1+\lambda)q}{\sigma^2(1-q^2)} \label{eqn:KKT5}.
\end{align}
By equating (\ref{eqn:KKT4}) and (\ref{eqn:KKT5}) we deduce that $q_*=\lambda$. Since $\lambda>0$, then $\mu \neq 0$ and by using (\ref{eqn:KKT3}) we get $\sigma^2_*=\frac{1-\rho}{1-\lambda}$. In addition, $\mu_*=\frac{\lambda}{1-\rho}$. Since the KKT conditions are satisfied by $q_*,\sigma^2_*$ and $\mu_*$ then strong duality holds, thus 
\begin{align}
& \min\limits_{\sigma^2,q} f(\lambda,\sigma^2,q) + \mu(\sigma^2(1-q)-1+\rho))= f(\lambda,\frac{1-\rho}{1-\lambda},\lambda) \nonumber \\
&= \frac{1}{2} \log{\frac{1}{1-\lambda^2}}-\frac{\lambda}{2} \log{(2\pi e)^2\frac{(1-\rho)^2(1+\lambda)}{1-\lambda}}. \label{eqn:mlprooffinal}
\end{align}
By combining (\ref{eqn:mlproof1}), (\ref{eqn:mlproof2}), (\ref{eqn:mlproof3}) and (\ref{eqn:mlprooffinal}) we get the desired lower bound. 

For the case $\rho < q$, let us optimize over $\sigma^2$ for any fixed $q$. The function $f$ is decreasing in $\sigma^2$. Also, the function $f$ is convex in $\sigma^2$. Since the object is continuous in $\sigma^2$ and the constraint is linear for any fixed $q$, then the optimal choice is $\sigma^2=\frac{1+\rho}{1+q}$. Thus, 
\begin{align}
\min\limits_{(\sigma^2,q) \in \mathcal{D}_{\rho}} f(\lambda,\sigma^2,q) \geq \min\limits_{q \in [\rho, 1]} f(\lambda,\frac{1+\rho}{1+q},q). \label{eqn:mlproof4}
\end{align}
The function on the right hand side can be written as
\begin{align}
f(\lambda,\frac{1+\rho}{1+q},q)&= \frac{1}{2}\log{(2\pi e)^2\frac{(1+\rho)^2}{(1+q)^2}} \\
& \quad -\frac{1+\lambda}{2}\log{(2\pi e)^2\frac{(1+\rho)^2(1-q)}{(1+q)}}.
\end{align}
The function is convex and increasing in $q$ for $q \in [\rho, 1]$, 
\begin{align}
\frac{\partial f}{\partial q}&=\frac{q+\lambda}{1-q^2}>0, \\
\frac{\partial^2 f}{\partial q^2}&=\frac{1+q^2+2\lambda q}{(1-q^2)^2} >0,
\end{align}
thus, the optimal value of $q=\rho$, is guaranteed to give the minimum. To conclude we show that $f(\lambda,\frac{1-\rho}{1-\lambda},\lambda) \leq f(\lambda,1,\rho)$ for $\lambda \leq \rho$. To show this we define
\begin{align}
h(\lambda)&:=f(\lambda,\frac{1-\rho}{1-\lambda},\lambda)-f(\lambda,1,\rho)\\
&=\frac{1}{2}\log{\frac{1-\rho^2}{1-\lambda^2}}- \frac{\lambda}{2}\log{\frac{(1+\lambda)(1-\rho)}{(1-\lambda)(1+\rho)}},
\end{align}
and the new defined function is increasing in $\lambda$,
\begin{align}
\frac{\partial h}{\partial \lambda}&=-\frac{1}{2} \log{\frac{(1+\lambda)(1-\rho)}{(1-\lambda)(1+\rho)}} \geq 0, \quad \text{for } \lambda\leq \rho 
\end{align}
and it is concave in $\lambda$,
\begin{align}
\frac{\partial^2 h}{\partial \lambda^2}&=-\frac{1}{1-\lambda^2}<0,
\end{align}
thus, $h(\lambda) \leq h(\rho)=0$. Then, $f(\lambda,1,\rho) \geq f(\lambda,\frac{1-\rho}{1-\lambda},\lambda)$. The argument goes through also for the case when $\rho$ is negative, which completes the proof.

\section*{Acknowledgment}
This work was supported in part by the Swiss National Science Foundation under Grant 169294, Grant P2ELP2\_165137. 
\bibliographystyle{IEEEtran}
\bibliography{nit_wyn,caching_giel}

\begin{thebibliography}{1}
\providecommand{\url}[1]{#1}
\csname url@samestyle\endcsname
\providecommand{\newblock}{\relax}
\providecommand{\bibinfo}[2]{#2}
\providecommand{\BIBentrySTDinterwordspacing}{\spaceskip=0pt\relax}
\providecommand{\BIBentryALTinterwordstretchfactor}{4}
\providecommand{\BIBentryALTinterwordspacing}{\spaceskip=\fontdimen2\font plus
\BIBentryALTinterwordstretchfactor\fontdimen3\font minus
  \fontdimen4\font\relax}
\providecommand{\BIBforeignlanguage}[2]{{%
\expandafter\ifx\csname l@#1\endcsname\relax
\typeout{** WARNING: IEEEtran.bst: No hyphenation pattern has been}%
\typeout{** loaded for the language `#1'. Using the pattern for}%
\typeout{** the default language instead.}%
\else
\language=\csname l@#1\endcsname
\fi
#2}}
\providecommand{\BIBdecl}{\relax}
\BIBdecl

\bibitem{Slepian--Wolf}
D.~Slepian and J.~Wolf, ``Noiseless coding of correlated information sources,''
  \emph{IEEE Transactions on Information Theory}, vol.~19, no.~4, pp. 471--480,
  1973.

\bibitem{Gray--Wyner}
R.~M. Gray and A.~D. Wyner, ``Source coding for a simple network,'' \emph{The
  Bell System Technical Journal}, vol.~53, no.~9, pp. 1681 -- 1721, 1974.

\bibitem{Wang--Lim--Gastpar}
C.-Y. Wang, S.~H. Lim, and M.~Gastpar, ``Information-theoretic caching:
  Sequential coding for computing,'' \emph{IEEE Transactions on Information
  Theory}, vol.~62, no.~11, pp. 6393 -- 6406, August 2016.

\bibitem{EPFL8001}
G.~J. {Op 't Veld}, ``\BIBforeignlanguage{eng}{A {G}aussian source coding
  perspective on caching and total correlation},'' Ph.D. dissertation, EPFL
  (Lausanne), 2017.

\bibitem{Hyper_Gauss}
\BIBentryALTinterwordspacing
C.~Nair. An extremal inequlity related to hypercontractivity of {G}aussian
  random variables. [Online]. Available:
  \url{http://chandra.ie.cuhk.edu.hk/pub/papers/manuscripts/ITA14.pdf}
\BIBentrySTDinterwordspacing

\bibitem{Thomas}
J.~Thomas, ``Feedback can at most double gaussian multiple access channel
  capacity (corresp.),'' \emph{IEEE Transactions on Information Theory},
  vol.~33, no.~5, pp. 711 -- 716, September 1987.

\bibitem{Geng--Nair}
Y.~Geng and C.~Nair, ``The capacity region of the two-receiver gaussian vector
  broadcast channel with private and common messages,'' \emph{IWCIT}, vol.~60,
  no.~4, April 2014.

\bibitem{Posner}
E.~Posner, ``Random coding strategies for minimum entropy,'' \emph{IEEE
  Transactions on Information Theory}, vol.~21, no.~4, pp. 388 -- 391, 1975.

\end{thebibliography}

\end{document}